\theoremstyle{definition}
\newtheorem{theorem}{Theorem}[section]
\newtheorem{lemma}[theorem]{Lemma} 
\newtheorem{definition}[theorem]{Definition} 
\newcommand{\NP}{{\textsf{NP}}}
\newcommand{\PO}{{\textsf{P}}}
\newcommand{\coNP}{{\textsf{coNP}}}
\newcommand{\FP}{{\textsf{FP}}}
\newcommand{\DP}{{\textsf{DP}}}
\newcommand{\comments}[1]{}
\title{Computational Complexity of Quadratic Unconstrained Binary Optimization} 
\author{Hirotoshi Yasuoka\\hirotoshi.yasuoka@gmail.com}
\begin{document}
\maketitle

\begin{abstract}
In this paper, we study the computational complexity of the quadratic unconstrained binary optimization (QUBO) problem under the functional problem ${\FP}^{\NP}$ categorization.  We focus on four sub-classes: (1) When all coefficients are integers QUBO is ${\FP}^{\NP}$-complete.  (2) When every coefficient is an integer lower bounded by a constant $k$,  QUBO is ${\FP}^{\NP[\log]}$-complete.  (3) When every coefficient is an integer upper bounded by a constant $k$,  QUBO is again ${\FP}^{\NP[\log]}$-complete.  (4) When coefficients can only be in the set $\{1, 0, -1\}$, QUBO is ${\FP}^{\NP[\log]}$-complete.  With recent results in quantum annealing able to solve QUBO problems efficiently, our results provide a clear connection between quantum annealing algorithms and the ${\FP}^{\NP}$ complexity class categorization.  We also study the computational complexity of the decision version of the QUBO problem with integer coefficients.  We prove that this problem is ${\DP}$-complete.

\end{abstract}
\section{Introduction}

The quadratic unconstrained binary optimization (QUBO) problem is the problem, given a quadratic expression over binary variables where the coefficients are integers, to find the minimum value of the expression.

The QUBO problem can be solved by quantum annealing which was proposed by Kadowaki and Nishimori~\cite{PhysRevE.58.5355}.   Quantum annealing is a method that solves optimization problems by utilizing quantum fluctuation.  D-Wave Systems, Inc. released a quantum computer that realizes quantum annealing.    The quantum computers that realize quantum annealing can efficiently find a solution to the QUBO problem~\footnote{For example, Chapuis et al.~\cite{DBLP:journals/vlsisp/ChapuisDHR19} mentioned that a quantum computer, the D-Wave 2X quantum annealer, can solve the QUBO problem with 45 binary variables in 0.06 seconds.}.  Therefore, researchers are investigating potential applications of QUBO (such as machine learning~\cite{10.3389/fphy.2018.00055}, traffic flow optimization~\cite{neukart2017traffic}, and automated guided vehicle control in a factory~\cite{10.3389/fcomp.2019.00009}).  

In this paper, we investigate the hardness of QUBO problems by using computational complexity theory.  Our results clarify that which problems can be solved via QUBO and which can not, therefore our results provide insight into which applications can receive a benefit from quantum annealing and which can not.  Specifically, we prove that QUBO is ${\FP}^{\NP}$-complete.  The computational complexity class ${\FP}^{\NP}$ is the set of functional problems that are solvable by a deterministic polynomial-time Turing machine that can query an oracle for an ${\NP}$-complete problem.  The completeness result implies that every problem in ${\FP}^{\NP}$ can be solved via QUBO.  For example, the traveling salesman problem is an ${\FP}^{\NP}$-complete problem, which was proven by Krentel~\cite{DBLP:journals/jcss/Krentel88}.  Intuitively, our result indicates that QUBO is as hard as the traveling salesman problem.   The completeness result also implies that a problem that is harder than ${\FP}^{\NP}$ cannot be solved via QUBO.   For example, 2QBF is an ${\NP}^{\NP}$-complete problem.  2QBF is the problem of checking whether a given quantified Boolean formula $\exists \vec{x} \forall\vec{y} .\phi$ where $\phi$ is a quantifier free Boolean formula can be satisfied.  It is unlikely that 2QBF can be solved via QUBO, as ${\NP}^{\NP}$ is believed to be harder than ${\FP}^{\NP}$~\footnote{This topic is discussed in more detail in Section~\ref{ssec:app}.}.

We also prove the hardness of subsets of the QUBO problem as follows:
\begin{itemize}
    \item QUBO is ${\FP}^{{\NP}[\log]}$-complete when every coefficient is greater than some constant bound.
    \item QUBO is ${\FP}^{{\NP}[\log]}$-complete when every coefficient is less than some constant bound.
    \item QUBO is ${\FP}^{{\NP}[\log]}$-complete when every coefficient is $-1$, $0$, or $1$.
\end{itemize}
The computational complexity class ${\FP}^{{\NP}[\log]}$ is the class of functional problems that are solvable by a deterministic polynomial-time Turing machine that can query an oracle for an ${\NP}$-complete problem $O(\log M)$ times where $M$ is the size of the input of the Turing machine.   For example, the clique problem is a ${\FP}^{{\NP}[\log]}$-complete problem.  The clique problem is the problem of finding the size of the largest clique in a given undirected graph.   Intuitively, our results show that QUBO with constant lower bounds for the coefficients is as hard as the clique problem.  

We also prove the hardness of the decision version of the QUBO problem, that is, the problem of checking whether the minimum value of a given quadratic binary expression with integer coefficients is equal to a given integer.  The problem is a ${\DP}$-complete problem.  The computational complexity class ${\DP}$ is the set of decision problems such that ${\DP}=\{P_0\cap P_1\mid P_0\in{\NP}, P_1\in{\coNP}\}$, which is defined by Papadimitriou and Yannakakis~\cite{PAPADIMITRIOU1984244}.  For example, the exact clique problem is a ${\DP}$-complete problem.  The exact clique problem is the problem of deciding whether the size of the largest clique in a given undirected graph is equal to a given integer.  Intuitively, our result show that the decision version of the QUBO problem is as hard as the exact clique problem.  

The rest of this paper is organized as follows.  Section~\ref{sec:pre} presents the problem definitions and the computational complexity class used in proving the hardness of QUBO problems.   Section~\ref{sec:qc} proves that QUBO is an ${\FP}^{\NP}$-complete problem, while Section~\ref{sec:bqc} demonstrates the complexity class of QUBO with constant bounds for the coefficients.  Section~\ref{sec:dqubo} demonstrates the complexity class of the decision version of QUBO problem.  Section~\ref{sec:dis} discusses the implications of our results, while Section~\ref{sec:rw} discusses related work.  Finally, Section~\ref{sec:con} concludes the paper.  

\section{Preliminaries}\label{sec:pre}
First, we formally define the QUBO problem as follows.
\begin{definition}[QUBO]
Let $x_1, x_2, ..., x_n$ be binary variables, and $Q$ be an  $n \times n$ upper triangular matrix of integers.  The QUBO problem is the problem of minimizing the following quadratic expression:
\[
\sum_{i=1}^n \sum_{j=1}^n q_{ij} x_i x_j
\]
\end{definition}
It is worth noting that quantum annealing can solve QUBO that has a quadratic expression whose coefficients are real numbers; however, in the problem above, the coefficients are integers~\footnote{We discuss the computational complexity of QUBO that has a quadratic expression whose coefficients are rational numbers.  See Section~\ref{ssec:hqrc}.}.  That is, this paper only investigates the computational complexity of a subset of the problem that is solvable by quantum annealing. 

Next, we describe the computational complexity classes that we use in proving the hardness of QUBO.  We assume that readers have basic knowledge of computational complexity theory~\cite{sipser13}; thus, we omit explanations of $\PO$ and $\NP$.

Recall that the computational complexity class ${\FP}^{\NP}$ is the set of functional problems that is solvable by a deterministic polynomial-time Turing machine that can query an oracle for an ${\NP}$-complete problem.   ${\FP}^{{\NP}[\log]}$ is the set of functional problems that is solvable by a deterministic polynomial-time Turing machine that can query an oracle for an ${\NP}$-complete problem $O(\log M)$ times, where $M$ is the size of the input of the Turing machine.  ${\DP}$ is the set of decision problems such that ${\DP}=\{P_0\cap P_1\mid P_0\in{\NP}, P_1\in{\coNP}\}$

In this paper, we use metric reduction to analyze the relation between functional problems, as this method has been used in previous research to analyze the hardness of optimization problems~\cite{DBLP:journals/jcss/Krentel88, DBLP:journals/mst/GasarchKR95}.  Metric reduction is defined as follows.
\begin{definition}[Metric reduction~\cite{DBLP:journals/jcss/Krentel88}]
Let $P_0$ and $P_1$ be functions.  A metric reduction from $P_0$ to $P_1$ is a pair of polynomial-time computable functions $(f,g)$ such that
$P_0(x)=g(P_1(f(x)),x)$ for all $x$.
\end{definition}
Intuitively, $f$ transforms the input of $P_0$ to the input of $P_1$, and $g$ transforms the output of $P_1$ to $P_0$ such that the transformed output is the actual solution of $P_0$.

\section{Hardness of QUBO}\label{sec:qc}
In this section, we demonstrate the computational complexity of QUBO.  First, we prove that QUBO is an ${\FP}^{\NP}$-hard problem by reducing 0-1 integer linear programming (01IP) to QUBO.  01IP is defined as follows.
\begin{definition}[01IP]
Given integer matrix $A$ and integer vectors $B$ and $C$,  find the maximum value of $C^T X$ over all binary vector $X$ subject to $A X \le B$. 
\end{definition}
\noindent
01IP is an ${\FP}^{\NP}$-complete problem under metric reduction, which was proven by Krentel~\cite{DBLP:journals/jcss/Krentel88}.  

\begin{theorem}\label{thm:qfpnph}
QUBO is ${\FP}^{\NP}$-hard.
\end{theorem}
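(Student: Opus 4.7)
The plan is to reduce 01IP to QUBO by the standard penalty/slack construction, so that feasibility of $AX\le B$ is enforced inside the objective and the 01IP maximum can be read off from the QUBO minimum.

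First, $f$ takes an instance $(A,B,C)$ of 01IP with $A\in\mathbb{Z}^{m\times n}$, $B\in\mathbb{Z}^m$, $C\in\mathbb{Z}^n$ and converts each inequality $A_i X\le b_i$ into an equality $A_i X + s_i = b_i$ by introducing a nonnegative slack $s_i$. Since $s_i$ is bounded by $|b_i|+\sum_j|A_{ij}|$, it can be expressed as $s_i=\sum_{k=0}^{K_i}2^k y_{i,k}$ with $K_i=O(\log(|b_i|+\sum_j|A_{ij}|))$ new binary variables $y_{i,k}$; so the total number of binary variables grows only polynomially. Next I would form the integer quadratic expression
\[
E(X,Y)\;=\;-C^T X \;+\; M\sum_{i=1}^m\Bigl(A_i X + \sum_{k=0}^{K_i}2^k y_{i,k} - b_i\Bigr)^{2},
\]
where $M:=1+\sum_{j=1}^n|c_j|$. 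Expanding the squares yields a quadratic form in the binary variables $X,Y$ with integer coefficients, so after the standard rewrite $x_i^2=x_i$ and collection into an upper triangular matrix $Q$, this is a legitimate QUBO instance of size polynomial in the input.

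The correctness argument splits on whether the 01IP is feasible. If some $(X,Y)$ satisfies all equalities, then the penalty vanishes and $E(X,Y)=-C^TX$; among all such $(X,Y)$ the minimum of $E$ equals $-\max\{C^TX : AX\le B\}$. If some constraint is violated, the penalty contributes at least $M$, which by our choice strictly exceeds $|C^TX|$ for every $X\in\{0,1\}^n$; hence any infeasible assignment is strictly worse than every feasible one, so the QUBO minimum is attained on a feasible point whenever one exists. The function $g$ then takes the QUBO optimum $v$ together with the 01IP input and outputs $-v$ (with the convention used by Krentel's formulation of 01IP to handle the infeasible case, which is detected by checking whether $v\ge M$, in which case $g$ returns the prescribed default value). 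Both $f$ and $g$ are clearly polynomial-time computable, so $(f,g)$ is a metric reduction, and combined with Krentel's $\FP^{\NP}$-completeness of 01IP this establishes $\FP^{\NP}$-hardness of QUBO.

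The only delicate point is bookkeeping: verifying that the chosen $M$ is large enough to dominate any $|C^TX|$ while the coefficients of the expanded QUBO stay polynomially bounded, that the slack bit-widths $K_i$ suffice to represent every possible slack value, and that the infeasibility case is handled consistently with whatever convention 01IP assigns to infeasible instances. None of these is conceptually hard, but they are the steps where the metric reduction can silently fail if the bounds are not chosen carefully, so that is where I would focus the verification.
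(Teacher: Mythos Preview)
Your proposal is correct and follows essentially the same route as the paper: a metric reduction from Krentel's 01IP via the standard slack-variable/penalty construction, with $g$ negating the QUBO optimum. The only differences are cosmetic---you use $M=1+\sum_j|c_j|$ and per-row slack widths $K_i$, whereas the paper uses $h=n\cdot\max_i c_i+1$ and a uniform slack width $k$, and you explicitly discuss the infeasible case---but the underlying argument is the same.
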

\begin{proof}
We demonstrate a method to transform the instance of 01IP to the instance of QUBO in polynomial time in the size of the instance of 01IP such that the solution of the instance of 01IP is equal to the product of the solution of the instance of QUBO and $-1$.   Note that the pair of the method to transform the instance of 01IP into the instance of QUBO and the function that multiplies the input by $-1$ is the metric reduction from 01IP to QUBO.

Let $A$ be an $m\times n$ integer matrix, and $B$ and $C$ be integer vectors of the following form.
\[
A = 
\begin{bmatrix} 
a_{1 1}&a_{1 2}& ...& a_{1 n}\\
a_{2 1}&a_{2 2}& ...& a_{2 n}\\
\vdots&&&\vdots\\
a_{m 1}&a_{m 2}& ...& a_{m n}
\end{bmatrix}
\;
B = 
\begin{bmatrix}
b_1\\b_2\\\vdots\\b_m
\end{bmatrix}
\;
C = 
\begin{bmatrix}
c_1\\c_2\\\vdots\\c_n
\end{bmatrix}
\]
We wish to solve the following instance of 01IP:
\[
\max_{X} \;C^T X \;\;\;
s.t. \; AX\le B 
\]

For this instance, we construct the following quadratic binary expression:  
\[
\mathrm{QB}(X,Y) = \mathrm{Obj}(X) + h \cdot \mathrm{Cstr}(X,Y),
\]
where 
\[
\begin{array}{rcl}
\mathrm{Obj}(X) &=& -\sum_{i\in\{1,..,n\}} c_i (x_i)^2\\
\mathrm{Cstr}(X,Y) &=& \sum_{i\in\{1,..,m\}}\left(\sum_{j\in\{1,..,n\}} a_{i j} x_j  + y_i - b_i\right)^2\\
y_i&=& \sum_{\ell\in\{1,..,k\}} 2^{\ell-1} y_{i \ell}\\
Y &=& \begin{bmatrix}y_{1 1}& \dots & y_{1 k}\\
y_{2 1}&\dots& y_{2 k}\\
&\vdots&\\
y_{m 1}&\dots& y_{m k}
\end{bmatrix}\\
h&=& n\cdot \max_i c_i +1\\
k&=&\lfloor \log(\max_i b_i -n\cdot \min_{i,j}a_{i j}) \rfloor + 1
\end{array}
\]
We provide an intuitive explanation of $\mathrm{QB}(X,Y)$.  Binary vector $X$ represents the binary vector appearing in the instance of 01IP, while binary matrix $Y$ is used to transform the inequality constraints of the instance of 01IP into equality constraints.  For example, the $i$-th column of the inequality constraint $AX\le B$ is transformed into the following equality. 
\[
\sum_{j\in\{1,..,n\}} a_{i j} x_j  +y_i - b_i =0
\]
where $y_i$ is a slack variable represented by binary representation, that is, $y_i = \sum_{\ell\in\{1,..,k\}} 2^{\ell-1} y_{i \ell}$.  The term $\mathrm{Cstr}(X,Y)$ adds a penalty to $\mathrm{QB}(X,Y)$ when $X$ does not satisfy the constraints of the instance of 01IP.  The term $\mathrm{Obj}(X)$ adds a penalty to $\mathrm{QB}(X,Y)$ when $C^T X$ decreases.  Therefore, by minimizing $\mathrm{QB}(X,Y)$,  the obtained assignment of $X$ is the solution of the instance of 01IP.  The coefficient $h$ is selected to add sufficiently large penalty when $\mathrm{Cstr}(X,Y)\not=0$.  Thus, the binary vector $X$ and the binary matrix $Y$ that minimize $\mathrm{QB}(X,Y)$ will satisfy $\mathrm{Cstr}(X,Y)=0$.    

We show that the minimum value of $\mathrm{QB}(X,Y)$ is equal to the product of the solution of the instance of 01IP defined above and $-1$.  We use the following propositions to prove the theorem.
\begin{itemize}
    \item If $\mathrm{Cstr}(X,Y)=0$ for some $Y$ then $A X\le B$.
    \item If $\mathrm{QB}(X,Y)$ attains the minimum value when $X=X'$ and $Y=Y'$, then $\mathrm{Cstr}(X',Y')=0$.
\end{itemize}

We prove the first proposition by proving its contrapositive.  Suppose that $AX\not\le B$, that is, there exist $\vec{x}=x_1,\dots,x_n$ and $i$ such that $\sum_{j\in\{1,\dots,m\}} a_{i j}x_j - b_i >0$.  We have 
\[
\begin{array}{l}
\sum_{j\in\{1,\dots,m\}} a_{i j}x_j - b_i>0\\
\;\;\Rightarrow \sum_{j\in\{1,..,n\}} a_{i j} x_j  \\
\;\;\;\;\;\;+\sum_{l\in\{1,..,k\}} 2^{l-1} y_{i l} - b_i > 0 \;\;\textrm{for any}\; Y\\
\;\;\Rightarrow \textrm{Cstr}(X,Y) \not= 0\;\;\textrm{for any}\;Y
\end{array}
\]

Next, we prove the second proposition by proving its contrapositive.  Suppose that there exist $X'$ and $Y'$ such that $\mathrm{Cstr}(X',Y')\not=0$.  Then, we have
\[
\mathrm{QB}(X',Y')\ge \textrm{Obj}(X') + h\ge -n\cdot\max_i c_i + h
\]
Let $Y_B$ be the binary matrix such that $Y_B \begin{bmatrix}2^0& 2^1& \dots& 2^{k-1}\end{bmatrix}^T - B = 0$.  Note that $\mathrm{Cstr}(\vec{0}, Y_B)=0$.  Thus, we have
\[
\begin{array}{rcl}
\mathrm{QB}(X',Y')&\ge& -n\cdot\max_i c_i + h \\
&=& 1\\
&>& 0 = \mathrm{Obj}(\vec{0})+\mathrm{Cstr}(\vec{0},Y_B)
\end{array}
\]
It follows that $\mathrm{QB}(X',Y')$ is not a minimum value.

Using the two propositions above, we show that the minimum value of $\mathrm{QB}(X,Y)$ is equal to the product of the solution of the instance of 01IP and $-1$.  Suppose that $\mathrm{QB}(X,Y)$ attains the minimum value when $X=X'$ and $Y=Y'$.  In addition, suppose that $\vec{x}$ is the solution to the instance of 01IP.  We have $\mathrm{QB}(X',Y')\le -C^T \vec{x}$, because there exists $Y_{\vec{x}}$ such that $\mathrm{Cstr}(\vec{x},Y_{\vec{x}}) =0$ and $-C^T \vec{x}=\mathrm{QB}(\vec{x},Y_{\vec{x}})\ge \mathrm{QB}(X',Y')$.  Therefore, it suffices to show that $\mathrm{QB}(X',Y')\ge -C^T\vec{x}$.  By the fact that $\mathrm{QB}(X',Y')$ is the minimum value and by the two propositions above, we have $\mathrm{Cstr}(X',Y')=0$ and hence $A X' \le B$.  This implies that $X'$ is a possible solution of the instance of 01IP.  Thus, we have $-C^T \vec{x} \le -C^T X' = \mathrm{QB}(X',Y')$.

Finally, we show that $\mathrm{QB}(X,Y)$ can be obtained in polynomial time in the size of the given instance of 01IP.  The number of binary variables in $\mathrm{QB}(X,Y)$, is $n+m k$ where $k =\lfloor \log(\max_i b_i - n\cdot \max_{i j} a_{i j}) \rfloor + 1$.  $n+m k$ is bounded by $O(M^2)$ where $M$ is the size of the instance of 01IP.  This is because $b_i$ and $a_{i j}$ are bounded by $2^M$, and $k$ is bounded by $O(M)$.   Each coefficient of $\mathrm{QB}(X,Y)$ can be obtained in polynomial time.  Therefore, $\mathrm{QB}(X,Y)$ can be obtained in polynomial time in the size of the instance of 01IP.
\end{proof}

Next, we prove that QUBO is in ${\FP}^{\NP}$.   We formulate the decision problem, DLEQUBO, which is in $\NP$.  Then, we show that QUBO can be solved in polynomial time by querying the oracle for DLEQUBO.  
\begin{definition}[DLEQUBO]
Let $X$ be a vector of binary variables, $Q$ be an $n \times n$ upper triangular matrix of integers, and $q$ be an integer.  DLEQUBO is the problem of deciding whether the following inequality holds:
\[
\min_X \sum_{i=1}^n \sum_{j=1}^n q_{ij} x_i x_j \le q 
\]
\end{definition}
Intuitively, DLEQUBO checks whether the solution of the instance of QUBO is less than or equal to a given integer.   We prove that DLEQUBO is in $\NP$ below.
\begin{lemma}\label{lem:DLEQUBO}
DLEQUBO is in \NP.
\end{lemma}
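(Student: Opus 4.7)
The plan is to realize DLEQUBO via the textbook guess-and-verify template for $\NP$. The nondeterministic machine will guess a binary assignment $X\in\{0,1\}^n$ and then deterministically check that the quadratic form evaluates to a value at most $q$.

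First I would observe that the certificate $X$ consists of only $n$ bits, which is bounded by the input size $M$ of the instance $(Q,q)$, since $Q$ alone already contains $n^2$ entries. Given such an $X$, the verifier computes
\[
S(X) \;=\; \sum_{i=1}^n \sum_{j=1}^n q_{ij}\, x_i\, x_j
\]
by performing $O(n^2)$ integer multiplications and additions. Every coefficient $q_{ij}$ has bit-length bounded by $M$, and since $S(X)$ is a sum of at most $n^2$ such integers, every intermediate value has bit-length at most $O(M+\log n)=O(M)$. Each arithmetic step therefore runs in time polynomial in $M$, so the total verification time, including the final comparison $S(X)\le q$, is polynomial in $M$.

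For correctness I would argue both directions. If $\min_X S(X)\le q$, then any argmin $X^\star$ serves as an accepting certificate; conversely, any accepting $X$ witnesses $\min_X S(X)\le S(X)\le q$. Hence DLEQUBO is decided nondeterministically in polynomial time and so lies in $\NP$.

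The only genuine subtlety — the main ``obstacle,'' such as it is — is the bit-length accounting needed to confirm that evaluating the quadratic form runs in polynomial time; beyond that, the argument is entirely mechanical, since the $n$-bit vector $X$ is the obvious certificate and evaluating a QUBO objective on a fixed assignment involves no combinatorial search.
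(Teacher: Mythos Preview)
Your proof is correct and follows essentially the same approach as the paper: both rewrite the condition $\min_X S(X)\le q$ as the existential statement $\exists X\, (S(X)\le q)$ and observe that a nondeterministic machine can guess $X$ and verify the inequality in polynomial time. Your version is simply more explicit about the certificate size and the bit-length bookkeeping for the verification step, which the paper leaves implicit.
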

\begin{proof}
We have
\begin{align*}
\min_X \sum_{i=1}^n \sum_{j=1}^n q_{ij} x_i x_j \le q \Leftrightarrow \bigvee_X \left(\sum_{i=1}^n \sum_{j=1}^n q_{ij} x_i x_j\le q\right) 
\end{align*}
Trivially, the disjunction of the inequalities above is solvable by a non-deterministic Turing machine in polynomial time.  Therefore, we have that DLEQUBO is in \NP. 
\end{proof}
By using Lemma~\ref{lem:DLEQUBO}, we prove the following theorem by demonstrating that QUBO is solvable in polynomial time by querying the oracle for DLEQUBO.  
\begin{theorem}\label{thm:qfpnp}
QUBO is in ${\FP}^{\NP}$.
\end{theorem}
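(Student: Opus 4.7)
The plan is to show $\mathrm{QUBO} \in \FP^{\NP}$ by exhibiting a deterministic polynomial-time algorithm that computes the optimum value via binary search using DLEQUBO as the $\NP$ oracle, with the oracle being in $\NP$ by Lemma~\ref{lem:DLEQUBO}. Write $m^{*}$ for the minimum of $\sum_{i,j} q_{ij}\, x_i x_j$ over $\{0,1\}^n$. The key observation is that $m^{*}$ is an integer confined to an explicit interval whose length is at most singly exponential in the input size, so a polynomial number of threshold queries suffices to pin it down.

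Concretely, I would first establish the a priori bound $|m^{*}| \le S$, where $S = \sum_{i,j} |q_{ij}|$. Since every coefficient $q_{ij}$ fits in the input description of size $M$, each $|q_{ij}| \le 2^{M}$, and so $\log S = O(M)$. Then I would run binary search maintaining integer bounds $L \le m^{*} \le U$, initialized to $L = -S$ and $U = S$. At each step, query the DLEQUBO oracle on the instance $(Q, q)$ with $q = \lfloor (L+U)/2 \rfloor$: on acceptance, set $U \leftarrow q$; on rejection, set $L \leftarrow q + 1$. Since $m^{*}$ is an integer and oracle acceptance is equivalent to $m^{*} \le q$, the invariant $L \le m^{*} \le U$ is preserved and the interval shrinks by roughly a factor of two per round, so after $O(\log(2S+1)) = O(M)$ rounds we reach $L = U = m^{*}$, which the algorithm then outputs.

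Finally, I would verify the resource bounds. Each round performs polynomial-time arithmetic to construct the query $(Q, q)$ (the threshold $q$ has bit-length $O(M)$) and consults the oracle once, so the whole procedure runs in polynomial time while making a polynomial number of $\NP$ oracle queries; this is exactly the definition of $\FP^{\NP}$. I do not expect a real obstacle here; the only point needing care is checking that the search range $[-S, S]$ has logarithm polynomial in $M$, which is immediate because each coefficient has bit-length at most $M$ and there are at most $n^{2}$ coefficients with $n \le M$.
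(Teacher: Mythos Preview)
Your proposal is correct and takes essentially the same approach as the paper: binary search for the optimum using DLEQUBO as the $\NP$ oracle, with a polynomially-bounded number of queries because the search interval has length at most exponential in the input size. The only cosmetic difference is that the paper uses the tighter interval $[\sum_{q_{ij}\le 0} q_{ij},\,0]$ (since $X=\vec{0}$ witnesses $m^{*}\le 0$) while you use $[-S,S]$, but this does not affect the argument.
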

\begin{proof}
We show that the solution of QUBO can be obtained in polynomial time in the size of the quadratic expression of a given QUBO problem by querying the oracle for DLEQUBO.   Let $lb$ be an integer such that $lb = \sum \{q_{ij}\mid q_{ij}\le 0\}$.   Trivially, the solution of the given QUBO is in the set $[lb, 0]$.  We use binary search to find the solution of the given QUBO problem.   Our procedure queries the oracle for DLEQUBO $\log(-lb) + O(1)$ times.  We have $\log(-lb) + O(1) \le O(M)$, because $|lb|$ is bounded by $2^M$ where $M$ is the size of the quadratic  expression.   Therefore, QUBO is in $\FP^{\mbox{DLEQUBO}}$.  By Lemma~\ref{lem:DLEQUBO}, QUBO is in $\FP^\NP$.   
\end{proof}

We thus have the following completeness result of QUBO from Theorems~\ref{thm:qfpnph} and~\ref{thm:qfpnp}.  
\begin{theorem}\label{thm:qc}
QUBO is ${\FP}^{\NP}$-complete.
\end{theorem}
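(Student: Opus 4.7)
The plan is to derive the completeness result as an immediate consequence of the two preceding theorems. By definition, QUBO is ${\FP}^{\NP}$-complete under metric reduction iff (i) QUBO lies in ${\FP}^{\NP}$ and (ii) every problem in ${\FP}^{\NP}$ metric-reduces to QUBO. Item (i) is exactly Theorem~\ref{thm:qfpnp}, so the only task is to assemble item (ii) from Theorem~\ref{thm:qfpnph} together with Krentel's result that 01IP is ${\FP}^{\NP}$-complete under metric reduction.

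First I would invoke Krentel's result to obtain, for any $P \in {\FP}^{\NP}$, a metric reduction $(f_0, g_0)$ from $P$ to 01IP. Next I would invoke Theorem~\ref{thm:qfpnph}, which supplies a metric reduction $(f_1, g_1)$ from 01IP to QUBO, where $f_1$ is the map sending an instance to the quadratic expression $\mathrm{QB}(X,Y)$ constructed in that proof, and $g_1(v, x) = -v$. Then I would verify that metric reductions compose: given an instance $x$ of $P$, set $f(x) = f_1(f_0(x))$ and $g(v, x) = g_0(g_1(v, f_0(x)), x)$. Both $f$ and $g$ are polynomial-time computable because compositions of polynomial-time computable functions are polynomial-time computable, and a direct unfolding yields $g(\mathrm{QUBO}(f(x)), x) = P(x)$. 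Hence $(f, g)$ is a metric reduction from $P$ to QUBO.

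The proof requires no new technical ingredients; the only point worth a line of justification is the transitivity of metric reductions, which I anticipate being the sole obstacle a careful reader might flag. I would therefore state that verification explicitly rather than leave it implicit. Combining the resulting hardness with Theorem~\ref{thm:qfpnp} then delivers ${\FP}^{\NP}$-completeness of QUBO and completes the argument.
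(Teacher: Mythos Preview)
Your proposal is correct and follows essentially the same approach as the paper, which simply notes that the completeness result is immediate from Theorems~\ref{thm:qfpnph} and~\ref{thm:qfpnp}. Your extra step of explicitly verifying transitivity of metric reductions is not strictly needed here, since Theorem~\ref{thm:qfpnph} already asserts ${\FP}^{\NP}$-hardness (not merely a reduction from 01IP), but it does no harm and is a reasonable clarification.
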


\section{Hardness of QUBO with a constant bound of coefficients}
\label{sec:bqc}
In this section, we demonstrate the computational complexity of QUBO with a constant bound of coefficients.  We analyze two problems.  One is QUBO with a constant lower bound, referred to as LQUBO, and the other is QUBO with a constant upper bound, referred to as UQUBO.  Formally, we define the problems as follows.
\begin{definition}[LQUBO]
Let $\ell$ be a constant such that $\ell<0$.  Let $X$ be a vector of binary variables, and $Q$ be an $n \times n$ upper triangular matrix of integers in which all elements are greater than $\ell$.  LQUBO is the problem of minimizing the following quadratic expression:
\[
\min_X \sum_{i=1}^n \sum_{j=1}^n q_{ij} x_i x_j
\]
\end{definition}
\begin{definition}[UQUBO]
Let $u$ be a constant such that $u>0$.  Let $X$ be a vector of binary variables, and $Q$ be an $n \times n$ upper triangular matrix of integers in which all elements are less than $u$.  UQUBO is the problem of minimizing the following quadratic expression:
\[
\min_X \sum_{i=1}^n \sum_{j=1}^n q_{ij} x_i x_j
\]
\end{definition}

First, we prove that LQUBO is in ${\FP}^{\NP[\log]}$.  Recall that ${\FP}^{{\NP}[\log]}$ is the class of functional problems solvable by a deterministic polynomial-time Turing machine that can query an oracle for an ${\NP}$-complete problem $O(\log M)$ times, where $M$ is the size of the input of the Turing machine.  In contrast to QUBO, LQUBO can be solved by querying DLEQUBO a logarithmic number of times in the size of the quadratic expression.  This is because the Turing machine that solves LQUBO uses binary search over the possible solutions where the number of possible solutions of LQUBO is bounded by a polynomial of the size of the quadratic expression.   
\begin{theorem}\label{thm:lqin}
LQUBO is in ${\FP}^{\NP[\log]}$.
\end{theorem}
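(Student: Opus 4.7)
The plan is to adapt the proof of Theorem~\ref{thm:qfpnp} by exploiting the constant lower bound $\ell$ to confine the minimum of the objective to an integer range of polynomial size, so that binary search with the DLEQUBO oracle terminates after only logarithmically many queries.

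First, I would bound the range of possible minimum values. Setting $X = \vec{0}$ yields $\sum_{i,j} q_{ij} x_i x_j = 0$, so the minimum is at most $0$. For a lower bound, since every coefficient satisfies $q_{ij} > \ell$ and the double sum contains at most $n^2$ terms, we have $\sum_{i,j} q_{ij} x_i x_j \ge \ell \cdot n^2$ for every assignment. Hence the minimum lies in the integer interval $[\ell n^2, 0]$, whose width is $|\ell| n^2$. Because $\ell$ is a fixed constant independent of the instance, this width is polynomial in the input size $M$.

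Next, I would perform binary search on $[\ell n^2, 0]$ using the DLEQUBO oracle supplied by Lemma~\ref{lem:DLEQUBO}. Each query of the form ``is the minimum $\le q$?'' halves the candidate range, so after $\lceil \log(|\ell| n^2) \rceil = O(\log M)$ queries the exact minimum is pinned down, and the certifying assignment can then be recovered bit-by-bit with another $O(n) = O(M)$ calls to DLEQUBO on restricted instances---although even for the decision/value version the count stays logarithmic. Since DLEQUBO is in $\NP$, this shows LQUBO $\in \FP^{\NP[\log]}$.

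The only subtle point, and thus the main item to verify carefully, is the claim that $|\ell| n^2$ is $O(\mathrm{poly}(M))$; this holds because $\ell$ is a constant and trivially $n \le M$. The remainder of the argument is a direct specialization of Theorem~\ref{thm:qfpnp}, with the polynomial-width search range $[\ell n^2, 0]$ replacing the exponential-width range $[lb, 0]$ used there.
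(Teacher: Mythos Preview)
Your approach is essentially identical to the paper's: bound the minimum in the integer interval $[\ell n^2,0]$, observe that this interval has width polynomial in $M$ because $\ell$ is a fixed constant, and binary-search for the value using the DLEQUBO oracle from Lemma~\ref{lem:DLEQUBO} in $O(\log M)$ queries.

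The one sentence you should drop is the aside about recovering a certifying assignment ``bit-by-bit with another $O(n)=O(M)$ calls.'' LQUBO, as defined, asks only for the minimum \emph{value}, so this step is unnecessary; more importantly, $O(M)$ oracle calls would exceed the $O(\log M)$ budget and would not establish membership in ${\FP}^{\NP[\log]}$. Your parenthetical acknowledgment that the value version stays logarithmic is the correct conclusion---just omit the detour through the assignment.
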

\begin{proof}
We show that the solution of LQUBO with an $n\times n$ integer matrix and a constant $\ell$ can be obtained in polynomial time by querying the oracle for DLEQUBO $O(\log M)$ times where $M$ is the size of the quadratic expression of a given LQUBO.   Trivially, the solution of the given LQUBO is in the set $[\ell \cdot n^2, 0]$.  We use a binary search to find the solution of the given LQUBO.   Our procedure queries the oracle for DLEQUBO $\log(-\ell\cdot n^2) + O(1)$ times.  We have $\log(-\ell \cdot n^2) + O(1) = \log(-\ell) + 2\log(n) +O(1) \le O(\log M)$, because $-\ell$ is a constant, and $n$ is bounded by $M$.  Therefore, LQUBO is in $\FP^{\mbox{DLEQUBO}[\log]}$.  By Lemma~\ref{lem:DLEQUBO}, we have that LQUBO is in $\FP^{\NP[\log]}$.  
\end{proof}

Next, we prove that UQUBO is in ${\FP}^{\NP[\log]}$.  Recall that the coefficients of the quadratic expression of UQUBO are not lower bounded by a constant as with those of QUBO.  Therefore, we need to query the oracle for DLEQUBO $O(M)$ times to obtain the solution of UQUBO where $M$ is the size of the quadratic expression of a given UQUBO if we adopt the procedure in the proof of Theorem~\ref{thm:qfpnp}.  However, interestingly, we can reduce UQUBO to the problem of minimizing the quadratic expression whose coefficients are lower bounded, and this problem is solvable by querying the oracle for a problem in $\NP$ $O(\log M)$ times.
\begin{theorem}\label{thm:uqin}
UQUBO is in ${\FP}^{\NP[\log]}$.
\end{theorem}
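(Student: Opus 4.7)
The plan is to adapt the binary-search strategy from Theorem~\ref{thm:qfpnp}, but to argue that for UQUBO the minimum value lies in an interval of \emph{polynomial} width, so only $O(\log M)$ DLEQUBO queries are needed instead of $O(M)$. Write $f(X) := \sum_{i=1}^n \sum_{j=1}^n q_{ij}x_i x_j$. The key observation is that although $\min_X f(X)$ can itself be as negative as $-2^M$, the constant upper bound $u$ on the coefficients lets us sandwich it between two polynomial-time computable endpoints whose difference is polynomially bounded in $M$.

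Concretely, I would set the lower endpoint
\[
L \;:=\; \sum_{(i,j)\,:\,q_{ij}<0} q_{ij},
\]
which is a valid lower bound on $f(X)$ because $q_{ij}x_ix_j\ge\min(q_{ij},0)$ for every binary $x_i,x_j$, and the upper endpoint $U := f(1,\dots,1) = \sum_{i\le j} q_{ij}$, which is a valid upper bound on $\min f$ because it is achieved by the all-ones assignment. A short calculation then yields
\[
U - L \;=\; \sum_{(i,j)\,:\,q_{ij}\ge 0} q_{ij} \;\le\; (u-1)\cdot \tfrac{n(n+1)}{2} \;=\; O(n^2),
\]
since each nonnegative coefficient is at most $u-1$ and there are at most $n(n+1)/2$ of them. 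This is exactly the ``lower-bounded coefficients'' reduction alluded to in the lead-in to the theorem: although the quadratic expression in UQUBO itself only has upper-bounded coefficients, the \emph{range of its possible minimum values} is as tight as that enjoyed by an LQUBO instance.

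Given this, I would run binary search for $\min f$ on the integer interval $[L,U]$, querying the DLEQUBO oracle at each midpoint $q$ with ``is $\min f \le q$?''. By Lemma~\ref{lem:DLEQUBO}, DLEQUBO is in $\NP$, both endpoints are computable in polynomial time from the input, and the number of iterations is $O(\log(U-L))=O(\log n)=O(\log M)$. This places UQUBO in $\FP^{\NP[\log]}$.

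The main conceptual obstacle is establishing the bound on $U-L$, not the binary search itself: the naive approach copied from Theorem~\ref{thm:qfpnp} would search $[-2^M,0]$ and use $\Theta(M)$ queries, and one has to recognise that although $|L|$ and $|U|$ can individually be exponentially large in $M$, the constant upper bound $u$ forces the total positive contribution $\sum_{q_{ij}\ge 0}q_{ij}$, and hence the gap $U-L$, to be only $O(n^2)$. Once this width bound is in hand, the rest of the argument is a direct analogue of the proofs of Theorem~\ref{thm:qfpnp} and Theorem~\ref{thm:lqin}.
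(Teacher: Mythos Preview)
Your argument is correct and, in fact, considerably simpler than the paper's. Both proofs ultimately perform a binary search with DLEQUBO as the oracle, but they differ in how they confine the search to a polynomial-width interval. The paper argues structurally: it shows that any coefficient $q_{k\ell}$ below the threshold $-2(n-1)u$ forces $x_k = x_\ell = 1$ at the optimum, so after fixing those variables the residual quadratic form has all coefficients in $[-2(n-1)u,\,u)$, and its minimum therefore lies in $[-2(n-1)u\cdot n^2,\,0]$. You instead exploit the identity
\[
f(1,\dots,1)\;-\;\sum_{q_{ij}<0}q_{ij}\;=\;\sum_{q_{ij}\ge 0}q_{ij}\;\le\;(u-1)\,\tfrac{n(n+1)}{2},
\]
which pins the minimum inside a window of width $O(n^2)$ without any reasoning about which variables are fixed at the optimum. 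This is a genuine simplification.

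Your route buys brevity and avoids the paper's case analysis on $k=\ell$ versus $k\ne\ell$; the paper's route yields a little extra structural information about the optimiser (it actually identifies a subset of variables that must equal $1$), but that information is not needed for the $\FP^{\NP[\log]}$ membership claim itself. One small remark: you should note explicitly that although $|L|$ and $|U|$ may require $\Theta(M)$ bits, each midpoint in the binary search is still representable and manipulable in polynomial time, so the overall running time stays polynomial while the number of oracle calls is $O(\log(U-L))=O(\log n)=O(\log M)$.
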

\begin{proof}
We show the procedure which solves UQUBO with an $n\times n$ integer upper triangular matrix $Q$ and a constant upper bound $u$ in polynomial time by querying the oracle for a problem in $\NP$ $O(\log M)$ times where $M$ is the size of the quadratic expression of the given UQUBO.  

Let $A$ be an $n\times n$ integer matrix such that
\[
a_{i j}
=
\begin{cases}
q_{i j}& q_{i j} \ge -2(n-1)\cdot u\\
0& \text{otherwise}
\end{cases}
\]
and $B$ be an $n\times n$ integer matrix such that $B=Q-A$, and let $I_B=\{i,j\mid q_{i j}<-2(n-1)\cdot u\}$.  First, our procedure constructs the set $I_B$.  Next, it constructs the integer matrix $A$.  Finally, it calculates \[\min_{X \in \{X\mid \forall a\in I_B. x_a=1\}} \sum_{i=1}^n \sum_{j=1}^n a_{ij} x_i x_j + \sum \{q_{i j}\mid q_{i j} < -2(n-1)\cdot u\}\]

We show that the expression above is equal to the solution of the given UQUBO problem.  It suffices to show that if $\sum_{i}\sum_{j} q_{i j} x_i x_j$ attains the minimum value when $X= z_1, \dots, z_n$ then the binary vector $z_1, \dots, z_n$ satisfies $\forall a\in I_B. z_a=1$.
This is because, by using this proposition, we have
\[
\begin{array}{l}
\min_{X} \sum_{i=1}^n \sum_{j=1}^n q_{ij} x_i x_j\\
\;=\min_{X \in \{X\mid \forall a\in I_B. x_a=1\}} \sum_{i=1}^n \sum_{j=1}^n (a_{ij} + b_{ij}) x_i x_j\\
\; = \min_{X \in \{X\mid \forall a\in I_B. x_a=1\}} \sum_{i=1}^n \sum_{j=1}^n a_{ij} x_i x_j + \sum \{q_{i j}\mid q_{i j} < -2(n-1)\cdot u\}
\end{array}
\]   
  We prove its contrapositive.  Let $z_1, \dots, z_n$ be a  binary vector such that $\forall a\in I_B. z_a=1$ does not hold, that is, there exist $k$ and $\ell$ such that $z_k z_\ell=0$ where $q_{k \ell} < -2(n-1)\cdot u$.  Let $z_1', \dots, z_n'$ be a binary vector such that $z_k' z_\ell'=1$ and $z_i' = z_i$ for $i\not\in\{k,\ell\}$.  Then, we show that $\sum_{i}\sum_{j} q_{i j} z_i' z_j' < \sum_{i}\sum_{j} q_{i j} z_i z_j$ by case analysis.
\begin{itemize}
    \item[-] $k\not=\ell$

We have
\[
\begin{array}{l}
\sum_{i}\sum_{j} q_{i j} z_i z_j - \sum_{i}\sum_{j} q_{i j} z_i' z_j'\\
= \sum_{i\in \{i\mid 1\le i \le k\}} q_{i k} (z_i z_k - z_i' z_k') + \sum_{j\in\{j\mid j>k, j\not=\ell\}} q_{k j} (z_k z_j -z_k' z_j')\\
\;\;\;\;+  \sum_{i\in \{i\mid 1\le i \le\ell, i\not=k\}} q_{i \ell} (z_i z_\ell - z_i' z_\ell') + \sum_{j\in\{j\mid j>\ell\}} q_{\ell j} (z_\ell z_j - z_\ell' z_j') \\
\;\;\;\; + q_{k \ell} (z_k z_\ell - z_k' z_\ell')\\
\ge (n - 1)\cdot (-u) + (n - 1)\cdot (-u) -q_{k \ell}\\ 
\end{array}
\]
because for any $i$ and $j$ we have $z_i z_j - z_i' z_j' \le 0$, $q_{i j} \le u$, and hence $q_{i j} ( z_i z_j - z_i' z_j' ) \ge -u$.  By the fact that $q_{k\ell}<-2(n-1)\cdot u$, we have
\[
\begin{array}{l}
(n -1)\cdot (-u) + (n-1)\cdot (-u) -q_{k \ell}\\
= -u\cdot 2(n-1) -q_{k \ell} > 0
\end{array}
\]

    \item[-] $k=\ell$

We have
\[
\begin{array}{l}
\sum_{i}\sum_{j} q_{i j} z_i z_j - \sum_{i}\sum_{j} q_{i j} z_i' z_j'\\
= \sum_{i\in \{i\mid 1\le i < k\}} q_{i k} (z_i z_k - z_i' z_k') + \sum_{j\in\{j\mid j>k\}} q_{k j} (z_k z_j - z_k' z_j') \\
\;\;\;\;+q_{k k} (z_k z_k - z_k' z_k')\\
\ge (n-1)\cdot (-u) -q_{k k}\\
 > 0
\end{array}
\]
\end{itemize}
For any cases, $\sum_{i}\sum_{j} q_{i j} z_i z_j$ is not a minimum value.  Thus, we prove the proposition.

Next, we show that the solution of the given UQUBO can be obtained in polynomial time by querying the oracle for a problem in $\NP$ $O(\log M)$ times.  Because, obviously, our procedure can construct $I_B$ and $A$ and calculate $\sum \{q_{i j}\mid q_{i j}<-2(n-1)\cdot u\}$ in polynomial time, we show that it can find the solution of $\min_{X \in \{X\mid \forall a\in I_B. x_a=1\}} \sum_{i=1}^n \sum_{j=1}^n a_{ij} x_i x_j$ by querying the oracle for a problem in $\NP$ $O(\log M)$ times.  It queries the oracle for the problem of deciding whether the following inequality holds.   
\[
\min_{X\in \{X\mid \forall a\in I_B.x_a=1\}} \sum_{i=1}^n \sum_{j=1}^n a_{ij} x_i x_j \le q 
\]
where $q$ is an integer.  This problem is in $\NP$, because the inequality above is equivalent to the following disjunction of inequalities which is solvable by a non-deterministic Turing machine in polynomial time.
\[
\bigvee_{X\in \{X\mid \forall a\in I_B.x_a=1\}} \left(\sum_{i=1}^n \sum_{j=1}^n a_{ij} x_i x_j\le q\right) 
\]
Our procedure uses a binary search to find the solution.  Because all elements in $A$ are greater than or equal to $-2(n-1)\cdot u$, the solution is in the set $[-2(n-1) \cdot u \cdot n^2, 0]$.  It follows that the solution can be obtained by querying the oracle $\log(2(n-1) \cdot u \cdot n^2) + O(1)$ times.  We have $\log(2(n-1) \cdot u \cdot n^2) + O(1) = \log(n-1) + 1 + \log(u) + 2\log(n) +O(1) \le O(\log M)$, because $u$ is a constant, and $n$ is bounded by $M$.   Therefore, our procedure can solve $\min_{X \in \{X\mid \forall a\in I_A. x_a=1\}} \sum_{i=1}^n \sum_{j=1}^n a_{ij} x_i x_j$ by querying the oracle for a problem in $\NP$ $O(\log M)$ times.

By the fact that the procedure solves the given UQUBO in polynomial time of $M$ by querying the oracle for the problem in $\NP$ $O(\log M)$ times where $M$ is the size of the quadratic expression of the given UQUBO, we have that UQUBO is in $\FP^{\NP[\log]}$. 

\end{proof}

Next, we determine that the hardness of LQUBO and UQUBO.  We formulate a subset of LQUBO, which is also a subset of UQUBO, referred to as SQUBO.  Then, we determine the complexity of SQUBO.  
\begin{definition}[SQUBO]
Let $X$ be a vector of binary variables, and $Q$ be an $n \times n$ upper triangular matrix of integers in which all elements are $-1$,$0$, or $1$.  SQUBO is the problem of minimizing the following quadratic  expression:
\[
\min_X \sum_{i=1}^n \sum_{j=1}^n q_{ij} x_i x_j
\]
\end{definition}

\begin{theorem}\label{thm:sqh}
SQUBO is ${\FP}^{\NP[\log]}$-hard.
\end{theorem}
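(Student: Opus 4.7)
The plan is to construct a metric reduction from the MAX-CLIQUE problem --- which the paper has noted to be $\FP^{\NP[\log]}$-complete --- to SQUBO. Given a graph $G=(V,E)$ with $n$ vertices, let $f(G)$ be the SQUBO instance whose upper-triangular coefficient matrix $Q$ is defined by $q_{ii}=-1$ for every $i$, $q_{ij}=1$ whenever $i<j$ and $(i,j)\notin E$, and $q_{ij}=0$ otherwise. All entries lie in $\{-1,0,1\}$, and $f$ is clearly computable in polynomial time. The paired recovery function will be $g(y,G) = -y$, so that MAX-CLIQUE$(G)$ is extracted from the SQUBO minimum by simple negation.

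The main technical step is to show that the minimum value of the constructed expression is exactly $-\omega(G)$, where $\omega(G)$ denotes the clique number of $G$. Identifying an assignment $X$ with the subset $S=\{i : x_i = 1\}$, the expression evaluates to $\Phi(S) := -|S| + p(S)$, where $p(S)$ counts the unordered non-edges of $G$ whose both endpoints lie in $S$. Taking $S$ to be any maximum clique yields $\Phi(S)=-\omega(G)$, giving the upper bound $\min \le -\omega(G)$. For the matching lower bound, I would apply an exchange argument: if $S$ is not a clique, pick any non-edge $(u,v)\subseteq S$ and set $S'=S\setminus\{v\}$. Then $|S'| = |S| - 1$ and $p(S') \le p(S) - 1$, since at least the non-edge $(u,v)$ has been removed, so $\Phi(S')\le \Phi(S)$. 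Iterating this step reduces any $S$ to a clique $C\subseteq S$ with $\Phi(C)\le \Phi(S)$, and since $\Phi(C) = -|C| \ge -\omega(G)$, we conclude $\Phi(S)\ge -\omega(G)$.

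Combining the two directions, the SQUBO minimum on $f(G)$ equals $-\omega(G)$, so $\omega(G) = g(\mathrm{SQUBO}(f(G)), G)$, and $(f,g)$ is a valid metric reduction from MAX-CLIQUE to SQUBO, proving the theorem. The step I expect to be the main obstacle is the exchange argument itself: everything hinges on the observation that deleting an endpoint of an internal non-edge simultaneously costs one unit in the linear contribution and saves at least one unit in the quadratic penalty, so the net change in $\Phi$ is non-positive. The remaining obligations --- checking that every coefficient lies in $\{-1,0,1\}$, confirming that $f$ and $g$ run in polynomial time, and noting that a maximum clique attains the bound --- are routine bookkeeping.
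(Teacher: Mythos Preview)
Your proposal is correct, and the reduction you construct is exactly the one the paper uses: the same matrix $Q$ with $q_{ii}=-1$, $q_{ij}=1$ on non-edges, and $0$ elsewhere, paired with the recovery map $g(y,G)=-y$. Where you diverge from the paper is in verifying that the minimum of $\Phi$ equals $-\omega(G)$. The paper proves this by taking an arbitrary minimizer $x'$ and explicitly constructing, via a recursive coordinate-by-coordinate definition, a vector $z$ with $\mathrm{SQ}(z)=\mathrm{SQ}(x')$ and $\mathrm{Cstr}(z)=0$; this requires an auxiliary lemma characterizing minimizers and a somewhat delicate telescoping argument. Your exchange argument --- delete one endpoint of an internal non-edge, observe that the linear loss of $1$ is offset by a penalty drop of at least $1$, and iterate --- accomplishes the same thing in a single line and is strictly more elementary. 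Both arguments ultimately show that some clique attains the minimum; yours just gets there faster. The remaining bookkeeping (coefficients in $\{-1,0,1\}$, polynomial-time computability of $f$ and $g$) is identical in both.
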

\begin{proof}
We reduce the clique problem to SQUBO.  Recall that the clique problem (CLIQUE) is the problem of finding the size of the largest clique in a given undirected graph.  A clique is defined as follows.  Given an undirected graph $G=(V,E)$ where $V$ is a set of vertices and $E$ is a set of edges\footnote{Because the graph $G$ is an undirected graph, the set of edges $E$ is a set of sets with two vertices.},  a clique is a subset $C$ of $V$ such that every two distinct vertices in $C$ are connected by an edge of $G$.   As mentioned, CLIQUE is an ${\FP}^{\NP[\log]}$-complete problem, which was proven by Krentel~\cite{DBLP:journals/jcss/Krentel88}.

We demonstrate a method to transform the instance of CLIQUE to the instance of SQUBO in polynomial time in the size of the instance of CLIQUE such that the solution of the instance of CLIQUE is equal to the product of the solution of the instance of SQUBO and $-1$.   Note that the pair of the method to transform the instance of CLIQUE into the instance of SQUBO and the function that multiplies the input by $-1$ is the metric reduction from CLIQUE to SQUBO.

Given an undirected graph $G=(V,E)$, we construct the following quadratic binary expression:
\[
\mathrm{SQ}(X) = \mathrm{Obj}(X) + \mathrm{Cstr}(X),
\]
where 
\[
\begin{array}{rcl}
\mathrm{Obj}(X)&=& -\sum_{i\in\{1,\dots,n\}} x_{i} ^2\\
\mathrm{Cstr}(X)&=& \sum_{(v_i,v_j)\in \bar{E}} x_{i}x_{j}\\
V&=&\{v_1,\dots,v_n\}\\
\bar{E} &=& \{(v_i,v_j)\mid 1\le i<j\le n \wedge \{v_i,v_j\}\not\in E\}
\end{array}
\]
We provide the intuitive explanation of $\mathrm{SQ}(X)$.  The set of pairs of vertices $\bar{E}$ denotes the complement of the set of edges $E$.  The binary vector $X$ denotes the set of vertices, for example, $x_{i}$ represents the vertex $v_i$.  The term $\mathrm{Cstr}(X)$ adds a penalty to $\mathrm{SQ}(X)$ when $X$ does not satisfy the requirement of a clique.   The term $\mathrm{Obj}(X)$ maximizes the size of the set of vertices, which is represented by $X$.   Therefore, by minimizing $\mathrm{SQ}(X)$, we will obtain the size of the largest clique\footnote{Note that the set of vertices that the optimum assignment to X represents will not necessarily be the clique.  That is, there exist $\vec{x}'$ such that $\mathrm{SQ}(\vec{x}')$ is the minimum value and $\mathrm{Cstr}(\vec{x}')\not=0$.  However, we can construct a binary vector that represents the largest clique from $\vec{x}'$ in polynomial time, and is discussed below.}.  Note that the problem of finding the minimum value of $\mathrm{SQ}(X)$ is an instance of SQUBO with a $n\times n$ integer upper triangular matrix $Q$ defined as follows:
\[
q_{i j} =
\begin{cases}
-1& i=j\\
1& (v_i,v_j)\in \bar{E}\\
0& \text{otherwise}
\end{cases}
\]
Therefore, $\mathrm{SQ}(X)$ can be obtained in polynomial time in the size of the given instance of CLIQUE.  

We show that the minimum value of $\mathrm{SQ}(X)$ is equal to the product of the size of the largest clique in $G$ and $-1$.   We use the following proposition.
\begin{itemize}
\item If $\mathrm{SQ}(X)$ attains the minimum value when $X=x_1',\dots,x_n'$\\ then there exist $z_1, \dots, z_n$ such that $\mathrm{SQ}(x_1,\dots,x_n)=\mathrm{SQ}(z_1,\dots,z_n)$ and $\mathrm{Cstr}(z_1, \dots, z_n)=0$.
\end{itemize}

We prove this proposition.  Suppose that $\mathrm{SQ}(X)$ attains the minimum value when $X=x_1',\dots,x_n'$.  Let $z_1, \dots, z_n$ be a binary vector such that
\[
z_a = \begin{cases}
0&\sum_{i\in\{i\mid i<a\}}q_{i a}z_i + \sum_{j\in\{j\mid j>a\}} q_{a j}x_j'\ge 1\\
x_a'&\text{otherwise}
\end{cases}
\]
We show that 
\begin{align}
\textrm{SQ}(x_1',\dots,x_n')&=\textrm{SQ}(z_1,\dots,z_n)\textrm{,  and}\\ 
    \textrm{Cstr}(z_1,\dots,z_n)&=0
\end{align}

For the proposition (1), we use the following lemma: If $\mathrm{SQ}(X)$ attains the minimum value when $X=y_1,\dots,y_n$ then for any $a$ such that $y_a=1$, $\sum_{i\in\{i\mid i<a\}}q_{i a}y_i + \sum_{j\in\{j\mid j>a\}} q_{a j}y_j \le 1$.  We prove its contrapositive.  Suppose $y_1,\dots,y_n$ satisfies $\sum_{i\in\{i\mid i<a\}}q_{i a}y_i + \sum_{j\in\{j\mid j>a\}} q_{a j}y_j > 1$ for some $a$ such that $y_a=1$.  Let $y_1',\dots,y_n'$ be a binary vector such that $y_a'=0$ and $y_k=y_k'$ for $k\in\{k\mid k\not=a\}$.  Then, we have 
\[
\begin{array}{l}
\mathrm{SQ}(y_1,\dots,y_n)-\mathrm{SQ}(y_1',\dots,y_n')\\
\;\;= \sum_{i\in\{i\mid i<a\}}q_{i a}(y_i y_a - y_i' y_a') + \sum_{j\in\{j\mid j>a\}}q_{a j}(y_a y_j - y_a' y_j')\\
\;\;\;\;+ q_{a a} (y_a y_a - y_a' y_a')\\
\;\;= \sum_{i\in\{i\mid i<a\}}q_{i a}y_i + \sum_{j\in\{j\mid j>a\}}q_{a j} y_j -1\\
\;\;>0
\end{array}
\]
Therefore, we have $\mathrm{SQ}(y_1,\dots,y_n)>\mathrm{SQ}(y_1',\dots,y_n')$.  Thus, we prove the lemma.

Now, we prove the proposition (1). From the definition of $z_1,\dots,z_n$, for any $a\in[1,n]$, we have
\[
\begin{array}{l}
\textrm{SQ}(z_1,\dots,z_{a-1}, x_a',x_{a+1}',\dots,x_n')-\textrm{SQ}(z_1,\dots, z_{a-1},z_a,x_{a+1}',\dots,x_n')\\
\;\;=\sum_{i\in\{i\mid i<a\}}q_{i a}(z_i x_a' - z_i z_a) + \sum_{j\in\{j\mid j>a\}}q_{a j}(x_a' x_j' - z_a x_j')\\
\;\;\;\;+ q_{a a} (x_a' x_a' - z_a z_a)\\
\;\;=(x_a' - z_a)(\sum_{i\in\{i\mid i<a\}}q_{i a}z_i + \sum_{j\in\{j\mid j>a\}}q_{a j}x_j' + q_{a a})
\end{array}
\]
If $x_a'=z_a$ holds, trivially, the expression above is $0$.  Otherwise, it must be the case that $x_a'=1$ and $z_a=0$.  By the lemma above, we have $\sum_{i\in\{i\mid i<a\}}q_{i a}x_i' + \sum_{j\in\{j\mid j>a\}} q_{a j}x_j' \le 1$.  By the definition of $z_a$ and the fact that $x_a'\not=z_a$, we have 
\[
1\ge \sum_{i\in\{i\mid i<a\}}q_{i a}x_i' + \sum_{j\in\{j\mid j>a\}} q_{a j}x_j' \ge \sum_{i\in\{i\mid i<a\}}q_{i a}z_i + \sum_{j\in\{j\mid j>a\}} q_{a j}x_j' \ge 1
\]
It follows that $\sum_{i\in\{i\mid i<a\}}q_{i a}z_i + \sum_{j\in\{j\mid j>a\}}q_{a j}x_j' + q_{a a}=0$, and the expression above is $0$.  Thus, we have $\textrm{SQ}(z_1,\dots,z_{a-1}, x_a',x_{a+1}',\dots,x_n')=\textrm{SQ}(z_1,\dots, z_{a-1},z_a,x_{a+1}',\dots,x_n')$.  Therefore, by using this equation from $\mathrm{SQ}(x_1',\dots,x_n')$ to $\mathrm{SQ}(z_1,\dots,z_n)$, we can prove the proposition (1). 

Next, we prove the proposition (2) by contradiction.  Suppose that there exist $k$ and $\ell$ such that $q_{k\ell}z_k z_\ell=1$.  This implies that we have $q_{k \ell}z_k =1$, and hence $\sum_{i\in\{i\mid i<\ell\}}q_{i \ell}z_i \ge 1$.  By the definition of $z_\ell$, we have $z_\ell=0$.  This leads to a contradiction.  Thus, we prove the proposition (2).

By using the propositions above, we show that the minimum value of $\mathrm{SQ}(X)$ is equal to the product of the size of the largest clique in graph $G$ and $-1$.  Suppose that $\mathrm{SQ}(X)$ attains the minimum value when $X=\vec{x}'$.  In addition, suppose that $G$ has the largest clique $C$.  Trivially, we have $\mathrm{SQ}(\vec{x}')\le -|C|$, because when $\vec{z}$ represents the largest clique $C$ in $G$, we have $-|C|=\mathrm{SQ}(\vec{z})\ge \mathrm{SQ}(\vec{x}')$ by the definition of $\mathrm{SQ}$.  Next, we show $\mathrm{SQ}(\vec{x}')\ge -|C|$ by contradiction.  Suppose that $\mathrm{SQ}(\vec{x}')<-|C|$.  By the proposition above, there exists $\vec{z}$ such that $\mathrm{SQ}(\vec{x}')=\mathrm{SQ}(\vec{z})$ and $\textrm{Cstr}(\vec{z})=0$.  Then, we have $C'=\{v_a\mid z_a=1\}$ is a clique of $G$, since for any $v, v'\in C'$, $\{v, v'\}\in E$ holds.  However, we have $-|C'| = \mathrm{SQ}(\vec{z})=\mathrm{SQ}(\vec{x}')  <- |C|$.  This leads to a contradiction with the fact that $C$ is the largest clique.  Thus, we have $\mathrm{SQ}(\vec{x})\ge -|C|$, and hence we have $\mathrm{SQ}(\vec{x})=-|C|$.
\end{proof}

We thus have the following completeness results of LQUBO and UQUBO.  
\begin{theorem}\label{thm:lqc}
LQUBO is ${\FP}^{{\NP}[\log]}$-complete.
\end{theorem}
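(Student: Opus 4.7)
The plan is to obtain Theorem~\ref{thm:lqc} essentially for free by combining the two pieces that Section~\ref{sec:bqc} has already established. For membership in $\FP^{\NP[\log]}$, I would simply invoke Theorem~\ref{thm:lqin}, whose proof already exhibits a binary-search procedure that queries the DLEQUBO oracle $O(\log M)$ times, exploiting the fact that a constant lower bound $\ell$ forces the minimum to lie in the polynomially-sized window $[\ell \cdot n^2, 0]$. For $\FP^{\NP[\log]}$-hardness, I would appeal to Theorem~\ref{thm:sqh}, which shows that SQUBO is $\FP^{\NP[\log]}$-hard via a metric reduction from CLIQUE.

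The remaining observation is that SQUBO is syntactically a special case of LQUBO: every SQUBO instance has coefficients in $\{-1,0,1\}$, so fixing the constant lower bound $\ell$ to any value with $\ell<-1$ (for concreteness $\ell=-2$) makes every SQUBO instance a valid LQUBO instance with lower bound $\ell$. Consequently, the pair $(f,g)$ from the proof of Theorem~\ref{thm:sqh} — where $f$ constructs $\mathrm{SQ}(X)$ from the input graph and $g$ multiplies the output by $-1$ — is simultaneously a metric reduction from CLIQUE to LQUBO. Since metric reductions compose, and CLIQUE is $\FP^{\NP[\log]}$-complete, every problem in $\FP^{\NP[\log]}$ metric-reduces to LQUBO, giving $\FP^{\NP[\log]}$-hardness.

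Combining the two directions yields $\FP^{\NP[\log]}$-completeness. There is no real obstacle here: the difficult work has already been discharged in Theorems~\ref{thm:lqin} and~\ref{thm:sqh}, and the only thing to check is that the SQUBO coefficient set $\{-1,0,1\}$ respects the LQUBO lower-bound condition, which it does trivially for any constant $\ell<-1$. I would therefore keep the proof to a couple of sentences that cite the two earlier theorems and record the inclusion SQUBO $\subseteq$ LQUBO.
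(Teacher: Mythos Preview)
Your proposal is correct and matches the paper's own argument exactly: the paper derives Theorem~\ref{thm:lqc} in a single sentence from Theorem~\ref{thm:lqin}, Theorem~\ref{thm:sqh}, and the observation that every SQUBO instance is also an LQUBO instance. Your remark about choosing a concrete $\ell<-1$ to witness the inclusion is the only extra detail, and it is fine.
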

\begin{theorem}\label{thm:uqc}
UQUBO is ${\FP}^{{\NP}[\log]}$-complete.
\end{theorem}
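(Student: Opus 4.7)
The plan is to derive Theorem~\ref{thm:uqc} as an immediate corollary of the two results already proved in this section, with no new construction needed. Membership in $\FP^{\NP[\log]}$ is exactly the content of Theorem~\ref{thm:uqin}, so that direction is free. The entire task reduces to establishing $\FP^{\NP[\log]}$-hardness of UQUBO.

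For hardness, I would observe that SQUBO is a syntactic restriction of UQUBO. Every SQUBO instance has all coefficients drawn from $\{-1,0,1\}$, and each of these integers is strictly less than $2$. Hence, fixing the constant upper bound $u=2$ in the definition of UQUBO, every SQUBO instance is \emph{verbatim} a valid UQUBO instance: the matrix $Q$, the variables, and the quadratic expression to be minimized are identical, and in particular the minimum value is the same. Consequently, the metric reduction $(f,g)$ from CLIQUE to SQUBO constructed in the proof of Theorem~\ref{thm:sqh}---where $f$ sends an undirected graph $G=(V,E)$ to the upper triangular $\{-1,0,1\}$-matrix $Q$ defined there, and $g$ multiplies the output by $-1$---also serves as a metric reduction from CLIQUE to UQUBO (with the fixed constant $u=2$). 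Since CLIQUE is $\FP^{\NP[\log]}$-hard by Krentel's result, this transfer gives $\FP^{\NP[\log]}$-hardness of UQUBO, and combining with Theorem~\ref{thm:uqin} yields completeness.

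I expect no substantive obstacle, as the entire argument is sandwiched between Theorems~\ref{thm:uqin} and~\ref{thm:sqh}. The only point worth a brief sanity check is that the inequality $q_{ij}<u$ in the definition of UQUBO is strict, so one cannot take $u=1$; but any constant $u>1$ (for instance $u=2$) suffices to subsume the coefficient set $\{-1,0,1\}$ of SQUBO, and UQUBO is defined for any constant $u>0$, so this choice is legitimate.
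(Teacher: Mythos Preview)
Your proposal is correct and matches the paper's own argument essentially verbatim: the paper derives Theorem~\ref{thm:uqc} directly from Theorem~\ref{thm:uqin} (membership), Theorem~\ref{thm:sqh} (hardness of SQUBO), and the observation that every SQUBO instance is also a UQUBO instance. Your additional remark about needing $u\ge 2$ rather than $u=1$ due to the strict inequality is a nice sanity check that the paper leaves implicit.
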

\noindent
Theorem~\ref{thm:lqc} follows from Theorem~\ref{thm:lqin}, Theorem~\ref{thm:sqh}, and the fact that every instance of SQUBO is also an instance of LQUBO.  Theorem~\ref{thm:uqc} follows from Theorem~\ref{thm:uqin}, Theorem~\ref{thm:sqh}, and the fact that every instance of SQUBO is also an instance of UQUBO.  

\section{Hardness of decision version of QUBO}
\label{sec:dqubo}
In this section, we demonstrate the computational complexity of the decision version of QUBO problem, referred to as DQUBO.  We define the problem as follows.
\begin{definition}[DQUBO]
Let $X$ be a vector of binary variables, $Q$ be an $n\times n$ upper triangular matrix of integers, and $q$ be an integer.  DQUBO is the problem of deciding whether the following equality holds:
\[
\min_X \sum_{i=1}^n \sum_{j=1}^n q_{ij} x_i x_j = q 
\]
\end{definition}
Intuitively, DQUBO checks whether the solution of the instance of QUBO problem is equal to a given integer.  

We prove that DQUBO is a ${\DP}$-complete problem.  The computational complexity class ${\DP}$ is
the set of decision problems such that ${\DP}=\{P_0\cap P_1\mid P_0\in{\NP}, P_1\in{\coNP}\}$, which is defined by Papadimitriou and Yannakakis~\cite{PAPADIMITRIOU1984244}. 

To prove that DQUBO is ${\DP}$-complete, we prove that DQUBO is ${\DP}$-hard and that DQUBO is in ${\DP}$ as follows.
\begin{lemma}
\label{lem:dqh}
DQUBO is $\DP$-hard.
\end{lemma}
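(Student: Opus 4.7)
The plan is to reduce EXACT CLIQUE, a canonical $\DP$-complete problem established by Papadimitriou and Yannakakis, to DQUBO via a polynomial-time many-one reduction. Recall that EXACT CLIQUE takes as input an undirected graph $G$ and an integer $k$, and asks whether the size of the largest clique of $G$ equals $k$. The paper has already singled this out as a $\DP$-complete problem in its introduction, so it is the natural source problem to target.

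Given an instance $(G,k)$ of EXACT CLIQUE, the first step is to reuse the construction from the proof of Theorem~\ref{thm:sqh} to build, in polynomial time, the upper-triangular integer matrix $Q_G$ whose associated quadratic form is exactly the expression $\mathrm{SQ}(X)$ attached to $G$ in that proof. The key identity established there is $\min_X \mathrm{SQ}(X) = -|C|$, where $|C|$ denotes the size of a largest clique of $G$. Consequently, the largest clique of $G$ has size exactly $k$ if and only if $\min_X \mathrm{SQ}(X) = -k$. The reduction then outputs the DQUBO instance $(Q_G, -k)$; correctness is immediate, since $(G,k)$ is a yes-instance of EXACT CLIQUE iff $\min_X \mathrm{SQ}(X) = -k$ iff $(Q_G,-k)$ is a yes-instance of DQUBO.

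Because the proof of Theorem~\ref{thm:sqh} already bounds the cost of constructing $Q_G$ by a polynomial in the size of $G$, and appending the target value $-k$ is trivial, the map $(G,k) \mapsto (Q_G, -k)$ is a polynomial-time many-one reduction. Combined with the $\DP$-completeness of EXACT CLIQUE, this yields the $\DP$-hardness of DQUBO.

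I anticipate no serious technical obstacle: the heavy lifting, namely the polynomial-time encoding of maximum clique as a QUBO minimum together with the exact identification of the minimum value, is already discharged in Theorem~\ref{thm:sqh}. The only remaining points to verify are conceptual. First, one must note that the appropriate notion of reduction for $\DP$-hardness is polynomial-time many-one reduction on decision problems, rather than the metric reduction used earlier in the paper for $\FP^{\NP}$-style classes; this is standard for $\DP$ and requires no further construction. Second, one must observe that Theorem~\ref{thm:sqh}'s exact equality $\min_X \mathrm{SQ}(X) = -|C|$, which was previously used only to recover the optimum value, transfers cleanly to DQUBO's exact-value question without any modification. Both verifications are routine, so the lemma essentially follows by direct invocation of the earlier construction.
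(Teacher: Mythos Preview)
Your proposal is correct and follows essentially the same approach as the paper: a polynomial-time many-one reduction from EXACT CLIQUE to DQUBO that reuses the expression $\mathrm{SQ}(X)$ from Theorem~\ref{thm:sqh} and maps $(G,k)$ to $(\mathrm{SQ}(X),-k)$. Your additional remarks about the appropriate notion of reduction and the transfer of the exact-value identity are accurate clarifications but do not diverge from the paper's argument.
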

\begin{proof}
We show a polynomial-time many-one reduction from the exact clique problem to DQUBO.  The exact clique problem (ECLIQUE) is the problem of deciding whether the size of the largest clique in a given undirected graph $G$ is equal to a given integer $k$.  Note that ECLIQUE is DP-complete~\cite{PAPADIMITRIOU1984244}.  

In Theorem~\ref{thm:sqh}, we showed the quadratic binary expression $\mathrm{SQ}(X)$ such that the size of the largest clique in a given undirected graph is equal to the product of the minimum value of $\mathrm{SQ}(X)$ and $-1$.  We also showed that the expression $\mathrm{SQ}(X)$ can be constructed in polynomial time in the size of a given undirected graph.  By using the expression, we can check whether $(G, k)$ is a solution of ECLIQUE by checking whether $(\mathrm{SQ}(X),-k)$ is a solution of DQUBO.  
\end{proof}

\begin{lemma}
\label{lem:dqin}
DQUBO is in DP.
\end{lemma}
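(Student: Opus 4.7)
The plan is to exhibit DQUBO as the intersection of an $\NP$ language and a $\coNP$ language, matching the definition $\DP=\{P_0\cap P_1\mid P_0\in\NP,\; P_1\in\coNP\}$. The key observation is that the equality $\min_X \sum_{i,j} q_{ij} x_i x_j = q$ is logically the conjunction of $\min \le q$ and $\min \ge q$, and these two halves land naturally on opposite sides of the $\NP/\coNP$ boundary.

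First, I would let $L_0 = \{(Q,q) \mid \min_X \sum_{i,j} q_{ij} x_i x_j \le q\}$. This is essentially the DLEQUBO language from Lemma~\ref{lem:DLEQUBO} (with the threshold supplied as part of the input rather than fixed), so $L_0 \in \NP$ by exactly the same certificate argument: a nondeterministic machine guesses an assignment $X$ and verifies $\sum_{i,j} q_{ij} x_i x_j \le q$ in polynomial time.

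Next, I would let $L_1 = \{(Q,q) \mid \min_X \sum_{i,j} q_{ij} x_i x_j \ge q\}$. Since the coefficients are integers, $\min \ge q$ is equivalent to $\neg(\min \le q-1)$, so the complement of $L_1$ is the instance of DLEQUBO with threshold $q-1$, which is in $\NP$ by Lemma~\ref{lem:DLEQUBO}. Hence $L_1 \in \coNP$. I would then note that $(Q,q)$ is a yes-instance of DQUBO precisely when $(Q,q) \in L_0 \cap L_1$, so DQUBO $\in \DP$ by definition.

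There is no real obstacle here; the only thing worth being careful about is the step from $\min \ge q$ to $\neg(\min \le q-1)$, which relies on the coefficients being integers so that the objective value is integer-valued and thus $\min < q$ is the same as $\min \le q - 1$. Combined with Lemma~\ref{lem:dqh}, this yields that DQUBO is $\DP$-complete.
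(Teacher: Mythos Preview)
Your proof is correct and follows essentially the same approach as the paper: both exhibit DQUBO as the intersection of DLEQUBO (the $\NP$ part) with the language $\{(Q,q)\mid \min_X \sum q_{ij}x_ix_j \ge q\}$ (equivalently $> q-1$) whose complement is DLEQUBO with threshold $q-1$ (the $\coNP$ part). The only cosmetic difference is that the paper phrases the $\coNP$ half as $\min > q-1$ directly, whereas you write $\min \ge q$ and then invoke integrality to reduce to the same thing.
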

\begin{proof}
We construct the problem $P_1$ such that ${\text{DQUBO}}={\text{DLEQUBO}} \cap P_1$ and $P_1\in{\coNP}$. 
Let $P_1$ is the problem of deciding whether the following inequality holds:
\[
\min_X \sum_{i=1}^n \sum_{j=1}^n q_{ij} x_i x_j > q-1 
\]
where $X$ is a vector of binary variables, $Q$ is an $n \times n$ upper triangular matrix of integers, and $q$ is an integer.   Trivially, we have ${\text{DQUBO}}={\text{DLEQUBO}} \cap P_1$.  We also have
\[
\min_X \sum_{i=1}^n \sum_{j=1}^n q_{ij} x_i x_j > q-1 \Leftrightarrow \neg \bigvee_X \left(\sum_{i=1}^n \sum_{j=1}^n q_{ij} x_i x_j\le q-1\right) 
\]
The problem of deciding whether the disjunction above holds is in ${\NP}$, because it is solvable by a non-deterministic Turing machine in polynomial time.  Therefore, we have $P_1$ is in ${\coNP}$.  

Therefore, the theorem follows from the fact that ${\text{DQUBO}}={\text{DLEQUBO}} \cap P_1$, ${\text{DLEQUBO}}\in {\NP}$ (Lemma~\ref{lem:DLEQUBO}), and $P_1\in{\coNP}$. 
\end{proof}
We thus have the following completeness result.
\begin{theorem}
DQUBO is DP-complete.
\end{theorem}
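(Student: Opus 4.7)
The plan is to obtain the theorem as an immediate corollary of the two lemmas that directly precede it. Lemma~\ref{lem:dqh} establishes that DQUBO is $\DP$-hard via a polynomial-time many-one reduction from ECLIQUE, and Lemma~\ref{lem:dqin} establishes that DQUBO lies in $\DP$ by writing it as the intersection of DLEQUBO (in $\NP$ by Lemma~\ref{lem:DLEQUBO}) and an auxiliary problem $P_1$ (in $\coNP$). Since $\DP$-completeness is by definition the conjunction of $\DP$-hardness and membership in $\DP$, no new content is required.

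Concretely, the proof I would write is a one-line combination: invoke Lemma~\ref{lem:dqh} to conclude $\DP$-hardness, invoke Lemma~\ref{lem:dqin} to conclude membership in $\DP$, and conclude $\DP$-completeness. The only detail worth verifying is that the notion of reduction used for $\DP$-hardness in Lemma~\ref{lem:dqh} (polynomial-time many-one reduction) matches the notion used to define $\DP$-completeness in this paper; since ECLIQUE is stated to be $\DP$-complete under this same reduction in~\cite{PAPADIMITRIOU1984244}, the transitivity of polynomial-time many-one reductions gives the claim.

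There is essentially no obstacle here: the hard work was done in Lemmas~\ref{lem:dqh} and~\ref{lem:dqin}, and in particular in the construction of $\mathrm{SQ}(X)$ from Theorem~\ref{thm:sqh} which was reused in Lemma~\ref{lem:dqh}. If anything, the only point that deserves a brief sentence is noting that the target value in DQUBO must be set to $-k$ rather than $k$ because the reduction from the clique problem negates the optimum, but this is already handled in the proof of Lemma~\ref{lem:dqh}. I would therefore present the proof as a two-sentence argument stitching the two lemmas together.
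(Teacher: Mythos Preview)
Your proposal is correct and matches the paper's approach exactly: the paper states the theorem immediately after Lemmas~\ref{lem:dqh} and~\ref{lem:dqin} with only the sentence ``We thus have the following completeness result,'' i.e., it treats the theorem as an immediate corollary of those two lemmas just as you do. Your additional remarks about the reduction notion and the $-k$ target are accurate but already absorbed into the lemmas, so the one-line combination you describe is precisely what the paper does.
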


\section{Discussion}\label{sec:dis}
\subsection{Hardness of QUBO with rational coefficients}\label{ssec:hqrc}
The ${\FP}^{\NP}$-completeness result of QUBO with integer coefficients (Theorem~\ref{thm:qc}) also holds for QUBO with rational coefficients, if we can assume that a rational number is represented by a pair of two integers: a numerator and denominator.   Trivially, the ${\FP}^{\NP}$-hardness result (Theorem~\ref{thm:qfpnph}) also holds for QUBO with rational coefficients.  We show that the ${\FP}^{\NP}$-membership result (Theorem~\ref{thm:qfpnp}) also holds for QUBO with rational coefficients.  We can obtain a quadratic binary expression in which all coefficients are integers by multiplying each numerator by a product of all the denominators of the coefficients.  The size of the obtained quadratic binary expression is bounded by $O(M^2)$, where $M$ is the size of the original expression with rational coefficients.   By Theorem~\ref{thm:qfpnp}, we can obtain the solution of QUBO with integer coefficients in polynomial time.  In addition, we can obtain the solution of the original expression with rational coefficients by dividing the solution of QUBO with integer coefficients by the product of the denominators. 

\subsection{Application}\label{ssec:app}
 The ${\FP}^{\NP}$-hardness result (Theorem~\ref{thm:qfpnph}) implies that a problem that is not harder than ${\FP}^{\NP}$ can be solved via QUBO such as problems in ${\NP}$ and ${\FP}^{\NP}$.   SAT is an ${\NP}$-complete problem that has many applications such as planning systems~\cite{DBLP:conf/ijcai/KautzS99}, software verification~\cite{DBLP:conf/tacas/BiereCCZ99}, and hardware verification~\cite{DBLP:conf/cav/KaivolaGNTWPSTFRN09}.   A concrete reduction from SAT to QUBO was proposed by Lucas~\cite{DBLP:journals/corr/abs-1302-5843}.   There are also ${\FP}^{\NP}$-complete problems such as the knapsack problem.  The knapsack problem is a problem, given integers $m_1, \dots, m_n$ and $K$, to find the maximum value $\sum_{m\in S} m$ such that $S\subseteq \{m_1, \dots, m_n\}$ and $\sum_{m\in S} m < K$.  Krentel~\cite{DBLP:journals/jcss/Krentel88} proved that the knapsack problem is an ${\FP}^{\NP}$-complete problem.  The knapsack problem has many applications such as resource allocation and finance portfolio optimization~\cite{DBLP:books/daglib/0010031}.  Reduction from the knapsack problem to 01IP is trivial; therefore, we can reduce the knapsack problem to QUBO by Theorem~\ref{thm:qfpnph}.  Applications of the knapsack problem are thus also solvable via QUBO.

The ${\FP}^{\NP}$-membership result of QUBO (Theorem~\ref{thm:qfpnp}) implies that a problem that is harder than $\FP^\NP$ can hardly be solved via QUBO.  For example,  2QBF is a problem that is harder than ${\FP}^{\NP}$.  Recall that this problem is the problem of checking whether a given quantified Boolean formula $\exists \vec{x} \forall\vec{y} .\phi$ where $\phi$ is a quantifier free Boolean formula can be satisfied.  2QBF is an ${\NP}^{\NP}$-complete problem.   We show that 2QBF can hardly be reduced to QUBO.  Suppose that 2QBF can be reduced to QUBO with respect to a metric reduction.  It follows that every problem that is solvable by a deterministic polynomial-time Turing machine that can query an oracle for 2QBF is solvable by a deterministic polynomial-time Turing machine that can query an oracle for QUBO, because we can calculate the answer of the oracle for 2QBF in polynomial time by querying the oracle for QUBO.  That is, ${\PO}^{\mathrm{2QBF}}\subseteq {\PO}^{\mathrm{QUBO}}$, thus ${\PO}^{{\NP}^{\NP}}\subseteq {\PO}^{{\FP}^{\NP}}$\footnote{Here, $A\subseteq B$ signifies that any problem in $A$ can be reduced to $B$ with respect to a many-one reduction.}.  ${\PO}^{{\FP}^{\NP}}$ is a class of decision problems that are solvable in polynomial time by a Turing machine with an oracle for ${\FP}^{\NP}$.  We have ${\PO}^{{\FP}^{\NP}}\subseteq {\PO}^{\NP}$, because we can calculate the answer of the oracle for ${\FP}^{\NP}$ in polynomial time by querying an oracle for ${\NP}$.  The fact that ${\PO}^{{\NP}^{\NP}}\subseteq {\PO}^{\NP}$ implies the collapse of the polynomial hierarchy to level 2.  However, the polynomial hierarchy is believed not to collapse.  Therefore, 2QBF can hardly be solved via QUBO.  

\section{Related work}\label{sec:rw}

The computational complexity of combinatorial optimization problems has been examined in various studies~\cite{DBLP:journals/jcss/Krentel88, DBLP:journals/mst/GasarchKR95}.   These studies used metric reduction to demonstrate the computational complexity of combinatorial optimization problems, and in this paper, we adopt their approach to demonstrate the computational complexity of QUBO.  

The computational complexity of QUBO was studied by Pardalos and Jha~\cite{DBLP:journals/orl/PardalosJ92}, who proved that quadratic 0-1 programming is ${\NP}$-hard.  Note that the definition of quadratic 0-1 programming is slightly different from the definition of QUBO.  
Specifically, quadratic 0-1 programming can be defined as QUBO with a symmetric matrix of integers.  Therefore, the ${\NP}$-hardness of quadratic 0-1 programming implies that the ${\NP}$-hardness of the QUBO problem, because we can trivially reduce quadratic 0-1 programming into QUBO.  

A number of QUBO formulations of $\NP$-hard problems have been presented by researchers~\cite{DBLP:journals/corr/abs-1302-5843,DBLP:journals/4or/GloverKD19, DBLP:journals/vlsisp/ChapuisDHR19}, who demonstrated the method to solve $\NP$-complete problems, ${\FP}^{{\NP}[\log]}$-complete problems, and ${\FP}^{\NP}$-complete problems via QUBO.   Some of their work can be used to prove that QUBO are hard.  However, their work does not imply the completeness results of QUBO.

Whereas we demonstrated how to solve QUBO by using Turing machine with an oracle to prove the computational complexity of QUBO, other researchers studied exact and heuristic methods to solve QUBO.   We refer readers to an existing survey~\cite{Kochenberger2014TheUB}.

\section{Conclusion}\label{sec:con}
In this paper, we investigated the hardness of QUBO using computational complexity theory.  We showed that QUBO with integer coefficients is an ${\FP}^{\NP}$-complete problem.  This result implies that every problem in ${\FP}^{\NP}$ (e.g., the traveling salesman problem, the knapsack problem) can be reduced to QUBO with integer coefficients, thus receiving a benefit from quantum annealing.   The completeness result also implies that a problem that is harder than ${\FP}^{\NP}$ cannot be solved via QUBO.  We also showed that QUBO with a constant lower bound for the coefficients and QUBO with a constant upper bound for the coefficients are ${\FP}^{{\NP}[\log]}$-complete problems, and that the decision version of QUBO is a ${\DP}$-complete problem.  

\section*{Acknowledgements}
We want to thank Akira Miki and Chih-Hong Cheng for useful advice.

\bibliographystyle{plain}
\bibliography{references.bib}

\end{document}